\newaliascnt{lemma}{theorem} \newtheorem{lemma}[lemma]{Lemma}
\newaliascnt{corollary}{theorem}
\title{Dynamic Products of Ranks}
\author{David Eppstein\thanks{Computer Science Department,
        University of California, Irvine, {\tt eppstein@uci.edu}. This work was supported in part by the US National Science Foundation under grant CCF-1616248.}}
\begin{document}
\thispagestyle{empty}
\maketitle

\begin{abstract}
We describe a data structure that can maintain a dynamic set of points given by their Cartesian coordinates, and maintain the point whose product of ranks within the two coordinate orderings is minimum or maximum, in time $O(\sqrt{n\log n})$ per update.
\end{abstract}

\section{Introduction}

The \emph{rank} of an element in a collection of elements is its position in a list of all elements, sorted by some associated numerical value. If elements have a multidimensional vector of values associated with them, then each of these values gives rise to a different rank, and we may wish to aggregate these multiple ranks into a single combined score. One common method of aggregating ranks is to use the geometric mean or equivalently the product of ranks as the combined score. This method is used in applications ranging from finding differentially regulated genes in DNA microarray data~\cite{BreArmAmt-FEBS-04}, choosing winners in multi-discipline sports events~\cite{IFSC}, and measuring the scholarly output of economists~\cite{Zim-Eco-13} to image recognition~\cite{ChaBar-ICASSP-06} and spam filtering in web search engines~\cite{DwoKumNao-WWW-01}.

In many of these applications, it is natural for the elements in the collection and their associated numerical values to change dynamically, and when they do the whole system of ranks for other elements may change. For instance, inserting one new element, with a low numerical value, will increase the ranks of all elements with larger values. This raises the question: how can we update the elements and their numerical values, and maintain information about the product of ranks?

We can model this as a geometry problem, in which the elements in the collection are modeled as points in the Cartesian plane, with the $x$- and $y$-coordinates of these points representing their associated numerical values.
In this model, we would like to maintain a dynamic set of pairs of real numbers, subject to point insertion and point deletion, and as we do so, maintain dynamically the point whose product of ranks in the two coordinate orderings is minimum or maximum.

In this work we provide a solution to this dynamic product of ranks problem. We solve the dynamic product of ranks problem, in the special case when there are two rankings being combined,
 in time $O(\sqrt{n\log n})$ per update.

There are three main ideas to our method:
\begin{itemize}
\item We partition the points into \emph{rigid} subsets: sets of points whose ranks all change in lockstep with each operation (that is, without changing the difference between the ranks of any two elements in the set). Our partition will have the property that each update will rebuild rigid subsets of total size $O(\sqrt{n\log n})$ and search for the point with minimum or maximum product of ranks within $O(\sqrt{n/\log n})$ of these subsets.
\item We provide two solutions to the dynamic product of ranks problem within each rigid subset. One solution applies a lifting transformation (to the pairs of ranks of the points, not their given coordinates) to turn it into a problem of querying a (static) three-dimensional convex hull. Dually, the other solution uses analogues of the classical Voronoi diagram and farthest-point Voronoi diagram, minimization and maximization diagrams with convex-polygon cells.
\item We provide linear-time constructions for the lifted convex hull in the minimization version of the problem, and for the maximization diagram in the maximization version of the problem, adapted from two different algorithms for linear-time construction of Voronoi diagrams of points in convex position.
\end{itemize}

Our method can be generalized to larger numbers of rankings, but with a quadratic blowup in the dimension of the lifting transformation that (together with the high complexity of higher-dimensional extreme point queries) leads to a running time per update that is only slightly smaller than the trivial naive algorithm of updating all rankings and recomputing all products in linear time per update. For this reason, we restrict our attention to maintaining information about the product of two rankings.

\section{Rigid subsets}

\subsection{Lifted hull}
We say that a subset $S$ of elements in our product of ranks problem is \emph{rigid}, through a sequence of updates, if none of the updates performs an insertion or deletion of an element of $S$, or of another element whose position in either of the two rankings lies between two elements of $S$. Equivalently, the difference in ranks of any two elements of $S$ remains invariant throughout the given sequence of updates.

\begin{lemma}
\label{lem:rigid}
Let $S$ be any subset of elements in the product of ranks problem, of size $m$. Then in time $O(m\log m)$ we can build a data structure for $S$ such that, throughout any sequence of updates for which $S$ is rigid, we can compute the elements of $S$ with the minimum or maximum product of ranks in time $O(\log m)$ per update.
\end{lemma}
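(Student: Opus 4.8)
The plan is to reduce the product-of-ranks extremum within a rigid set to an extreme-element query on a static lower or upper envelope of planes, which is a three-dimensional convex hull problem. First I would fix a reference configuration at build time and record, for each element $s_i \in S$, its pair of ranks as offsets $(a_i,b_i)$. By the definition of rigidity, no update inserts or deletes an element lying (in either ranking) between two elements of $S$; hence every rigid update shifts the $x$-ranks of all of $S$ by one common amount and shifts the $y$-ranks by one common amount (and these two amounts may differ). Writing $X$ and $Y$ for the accumulated shifts, the current ranks of $s_i$ are exactly $(a_i+X,\,b_i+Y)$, where the offsets $(a_i,b_i)$ are fixed once and for all and only the global pair $(X,Y)$ changes, by $0$ or $\pm 1$ in each coordinate, from one update to the next.

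The key observation is that the product of ranks of $s_i$ is $(a_i+X)(b_i+Y)=a_i b_i + b_i X + a_i Y + XY$, and the term $XY$ does not depend on $i$. Consequently, at any fixed query value $(X,Y)$, the element minimizing (resp.\ maximizing) the product is precisely the element minimizing (resp.\ maximizing) the affine function $f_i(X,Y)=a_i b_i + b_i X + a_i Y$. The graphs of the $f_i$ are $m$ non-vertical planes in $(X,Y,z)$-space, so the minimizing element over all $(X,Y)$ is read off from their lower envelope and the maximizing element from their upper envelope. This is exactly the lifting transformation promised in the introduction: each element becomes a plane, and locating the extreme element at a query shift is a vertical point-location query against the envelope, equivalently an extreme-point query on the dual three-dimensional convex hull.

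Having made this reduction, the data structure is classical. I would sort $S$ in each coordinate to obtain the offsets $(a_i,b_i)$ in $O(m\log m)$ time, build the lower envelope (for the minimum) or the upper envelope (for the maximum) of the $m$ planes via any $O(m\log m)$-time three-dimensional convex hull algorithm, and preprocess the projection of the envelope onto the $(X,Y)$-plane, a convex subdivision with $O(m)$ cells, for planar point location. Because rigidity freezes the offsets, this envelope is built once and is static; each update merely adjusts the current shift $(X,Y)$ by a known amount and then performs one point-location query to report the cell, and hence the extreme element, in $O(\log m)$ time. The minimum and maximum are handled by two such structures.

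The one genuinely non-routine step is the cancellation in the second paragraph: recognizing that, although $(a_i+X)(b_i+Y)$ is bilinear, its only $i$-independent contribution is the common $XY$ term, so that for the purpose of selecting the extreme element the objective collapses to a linear function of the query parameters and the whole problem linearizes into envelopes of planes. Everything downstream, namely the $O(m\log m)$ hull construction and $O(\log m)$ point location, is standard, and the linear-time envelope constructions described later in the paper sharpen the build time but are not needed to establish the bounds claimed here.
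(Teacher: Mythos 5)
Your proof is correct, and its crucial step is identical to the paper's: the observation that in $(a_i+X)(b_i+Y)=a_ib_i+b_iX+a_iY+XY$ the bilinear term $XY$ is common to all elements, so selecting the extreme element reduces to extremizing an affine function of the shift (the paper writes the same identity as $(x_i-a)(y_i-b)=ab-ay_i-bx_i+z_i$ with $z_i=x_iy_i$). Where you diverge is the query mechanism. The paper's proof of this lemma stays in the primal: it lifts each element to the point $(x_i,y_i,x_iy_i)$, builds the three-dimensional convex hull of these points in $O(m\log m)$ time, and attaches a Dobkin--Kirkpatrick hierarchy so that each update is answered by a linear-optimization (extreme-point) query in $O(\log m)$ time. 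You pass to the projective dual: the lower or upper envelope of the planes $z=a_ib_i+b_iX+a_iY$, queried by planar point location in the envelope's projection. Both meet the stated bounds, and in fact the paper develops your dual view as well, just later and for a different purpose: your envelope projections are exactly its minimization/maximization diagrams (whose bisectors are analyzed in \autoref{lem:bisector}), which it needs for the expected-linear-time construction in the maximization case (\autoref{lem:linmax}), while the primal hull is what plugs into the linear-time hull algorithm of Aggarwal et al.\ for the minimization case (\autoref{lem:linmin}). So your route is the paper's ``other solution'' applied to this lemma; the only small economy you give up is that a single hull with a Dobkin--Kirkpatrick hierarchy answers both the minimum and the maximum query (extreme points in opposite directions), whereas your formulation maintains two separate envelope/point-location structures.
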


\begin{proof}
Let $(x_i,y_i)$ be the ranks of the elements of $S$ prior to the sequence of updates for which $S$ is rigid.
We construct the three-dimensional convex hull of the lifted points $(x_i,y_i,x_iy_i)$, and a Dobkin--Kirkpatrick hierarchy allowing us to perform linear optimization queries (finding the extreme point on the resulting hull of a given linear function) in time $O(\log m)$ per query~\cite{DobKir-Algs-85}. The hull takes $O(n\log n)$ time to construct and its Dobkin--Kirkpatrick hierarchy takes an additional $O(n)$ time. For each element, let $z_i=x_iy_i$ denote its third coordinate in this lifted point set.

After a sequence of updates that have changed the ranks by subtracting the same offset $a$ from each rank $x_i$ and the same offset $b$ from each rank $y_i$ within $S$, the updated products of ranks are
\[(x_i-a)(y_i-b)=ab-ay_i-bx_i+z_i,\]
a linear function of the three coordinates of the lifted points, so the elements with the minimum and maximum product of ranks can be found by a linear optimization query.
\end{proof}

This method is closely analogous to the classical lifting transformation of two-dimensional closest-point problems to three-dimensional extreme-point problems~\cite{Bro-IPL-79}, which in its most commonly used form maps pairs $(x_i,y_i)$ to triples $(x_i,y_i,x_i^2+y_i^2)$; however, we use a different quadratic function for the third coordinate.
Note that we will only query this structure for pairs $(a,b)$ with $a\le x_i$ and $b\le y_i$, because the differences $x_i-a$ and $y_i-b$ represent ranks and are therefore non-negative.

\subsection{Linear time construction}

To construct the lifted hull more quickly, it is helpful to reduce the set of points to a subset whose projection to the plane is convex.

\begin{figure}[t]
\centering
\includegraphics[scale=0.45]{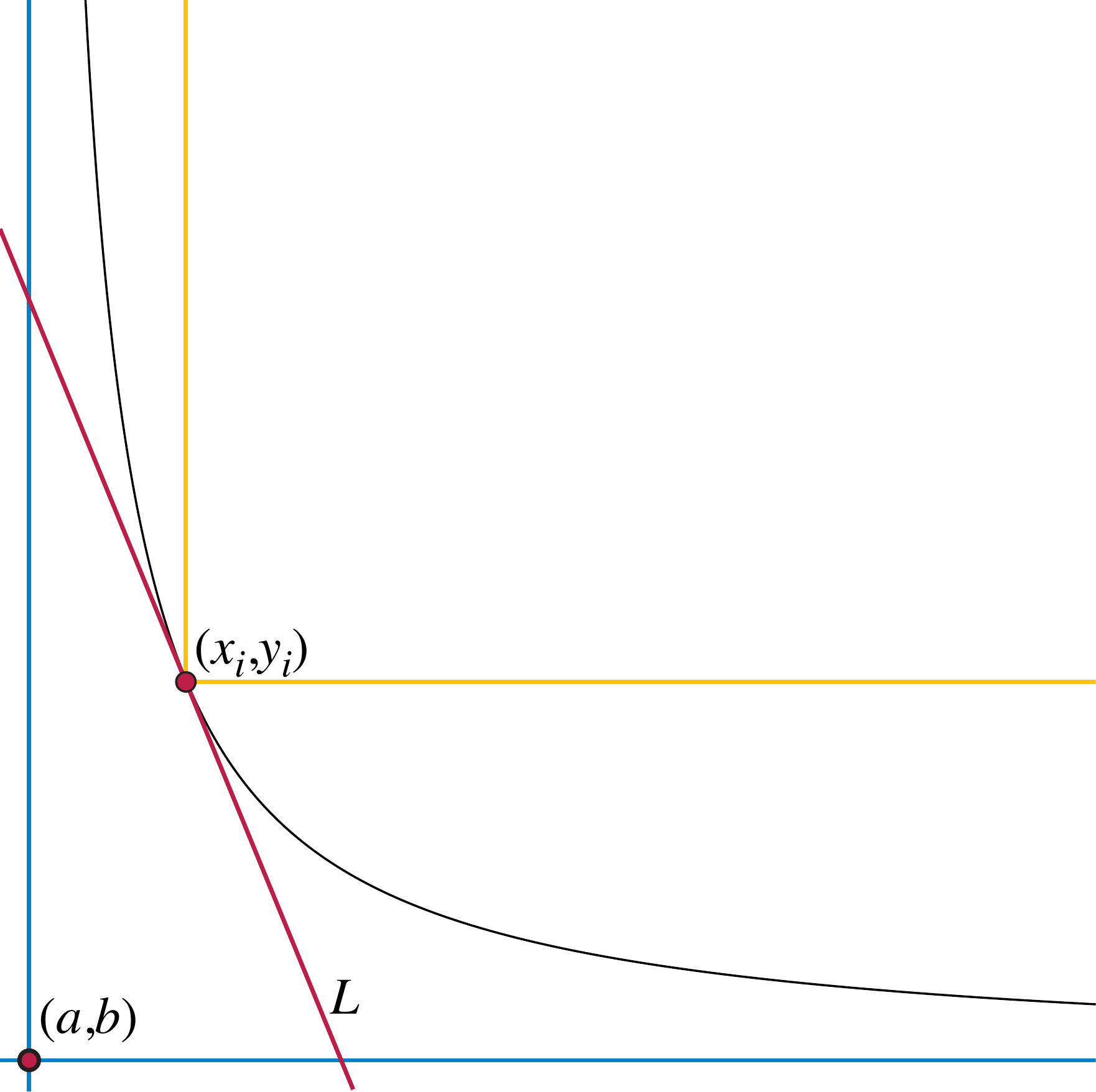}
\caption{The minimizer of $(x_i-a)(y_i-b)$ must be a convex hull vertex, because the region below line $L$, the tangent to the hyperbola through $(x_i,y_i)$, must be disjoint from $S$ (\autoref{lem:hull}). Analogously, the maximizer of $(x_i-a)(y_i-b)$ must be a maximal point of $S$, because the region above and to its left (yellow) must be disjoint from $S$ (\autoref{lem:dom}).}
\label{fig:hyperbola}
\end{figure}

\begin{lemma}
\label{lem:hull}
Let $S$ be a set of points, let $(a,b)$ be a pair of numbers with $a$ less than or equal to all $x$-coordinates in $S$ and $b$ less than or equal to all $y$-coordinates in $S$. Let $(x_i,y_i)$ be the point in $S$ minimizing $(x_i-a)(y_i-b)$.
Then $(x_i,y_i)$ lies on the convex hull of $S$.
\end{lemma}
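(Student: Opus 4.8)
The plan is to exhibit a closed half-plane that contains all of $S$ and has $(x_i,y_i)$ on its bounding line; this certifies that $(x_i,y_i)$ is an extreme point and hence lies on the convex hull. The natural candidate for the bounding line is $L$, the tangent to the hyperbola through $(x_i,y_i)$ depicted in \autoref{fig:hyperbola}.

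First I would set $c^\ast = (x_i-a)(y_i-b)$, the minimum value of the objective, so that every point $(x_j,y_j)\in S$ satisfies $(x_j-a)(y_j-b)\ge c^\ast$. I would dispose of the degenerate case $c^\ast=0$ immediately: then either $x_i=a$ or $y_i=b$, and since $a$ and $b$ are lower bounds on the coordinates of $S$, the point $(x_i,y_i)$ is either the leftmost or the bottommost point of $S$ and is trivially a hull vertex. So assume $c^\ast>0$, which forces $x_i>a$ and $y_i>b$, and in fact forces $x_j>a$ for every $j$, since a point with $x_j=a$ would give product $0<c^\ast$.

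The key step is to reinterpret the constraint $(x_j-a)(y_j-b)\ge c^\ast$ on the half-plane $x>a$ as membership in the epigraph of a convex function. Dividing by the positive quantity $x_j-a$ gives $y_j\ge b+c^\ast/(x_j-a)=:f(x_j)$, where $f$ is convex and decreasing on $(a,\infty)$, and its graph is exactly the hyperbolic level curve through $(x_i,y_i)$. Thus all of $S$ lies in the epigraph $\{(x,y):y\ge f(x)\}$, a convex region. Because $f$ is convex, its epigraph lies weakly above every tangent line to its graph; taking the tangent line $L$ at $(x_i,y_i)$ (which is non-vertical with finite negative slope $-(y_i-b)/(x_i-a)$ since $x_i>a$), every point of $S$ lies on or above $L$ while $(x_i,y_i)$ itself lies on $L$. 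Equivalently, the region below $L$ is disjoint from $S$, as the figure asserts. Hence $(x_i,y_i)$ minimizes, over $S$, the linear functional whose level sets are parallel to $L$, so it lies on the convex hull of $S$.

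The one real subtlety—and the place where the directions of inequalities must be watched carefully—is that the sublevel set of $(x-a)(y-b)$ is itself nonconvex, a saddle-shaped region, so one cannot argue convexity of the objective directly. The argument instead hinges on recognizing that the \emph{super}level region, restricted to the quadrant in which the given offsets $a,b$ place all of $S$, is precisely the convex epigraph of $f$; once that reduction is made, the supporting-tangent-line step is routine. This is exactly what lets \autoref{lem:rigid} replace the minimization over all of $S$ by a search among hull vertices.
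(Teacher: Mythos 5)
Your proof is correct and takes essentially the same route as the paper's: both arguments identify $L$, the tangent at $(x_i,y_i)$ to the level hyperbola $(x-a)(y-b)=c^\ast$, and show that the halfplane below $L$ contains no point of $S$, so that $L$ is a supporting line and $(x_i,y_i)$ lies on the hull. The differences are only presentational: you justify the supporting property analytically, via convexity of $f(x)=b+c^\ast/(x-a)$ and its epigraph, where the paper argues geometrically about points between $L$ and the other branch of the hyperbola, and you additionally dispose of the degenerate case $c^\ast=0$, which the paper leaves implicit.
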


\begin{proof}
The locus of points $(x,y)$ with $(x-a)(y-b)=(x_i-a)(y_i-b)$ is a hyperbola, asymptotic to the lines $x=a$ and $y=b$, with $(x_i,y_i)$ on its positive branch. Let $L$ be the line tangent to this hyperbola at $(x_i,y_i)$; see \autoref{fig:hyperbola}. Then the halfplane below $L$ must be disjoint from $S$, for any point $(x_j,y_j)$ between $L$ and the other branch of the hyperbola would have a smaller value of $(x_j-a)(y_j-b)$ and by the assumptions on $a$ and $b$ there are no points of $S$ on the other side of the other branch of the hyperbola.
\end{proof}

Aggarwal et al.~\cite{AggGuiSax-DCG-89} showed that, for 3d points whose two-dimensional projection is convex, the 3d convex hull can be constructed in linear time. In the next lemma we apply this result to the lifted hull of \autoref{lem:rigid}.

\begin{lemma}
\label{lem:linmin}
Let $S$ be any subset of elements in the product of ranks problem, of size $m$, for which the sorted order by $x$-coordinate is known. Then in time $O(m)$ we can build a data structure for $S$ such that, throughout any sequence of updates for which $S$ is rigid, we can compute the elements of $S$ with the minimum product of ranks in time $O(\log m)$ per update.
\end{lemma}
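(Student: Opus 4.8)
The plan is to follow the construction of \autoref{lem:rigid} but to replace its generic $O(m\log m)$ convex-hull construction with linear-time specializations that exploit the given sorted order and the structure guaranteed by \autoref{lem:hull}. The first observation is that we need retain only the vertices of the two-dimensional convex hull of the rank points $(x_i,y_i)$: by the remark following \autoref{lem:rigid}, every offset pair $(a,b)$ arising during a rigid update sequence satisfies $a\le x_i$ and $b\le y_i$ for all $i$, so \autoref{lem:hull} applies to each such query and places every minimizer on that hull. Points strictly interior to the hull can therefore be discarded once and for all.

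Because $S$ is given sorted by $x$-coordinate, its two-dimensional hull can be produced in $O(m)$ time by a single monotone-chain scan, with no sorting step, yielding the hull vertices in strictly convex position (discarding any collinear boundary points). I would then argue that this pruning to strict vertices loses no minimizer. Parametrizing any hull edge linearly, the function $(x-a)(y-b)$ becomes a quadratic in the parameter whose leading coefficient is the product of the coordinate increments along the edge: it is concave along negatively sloped edges and linear along axis-parallel edges, so its minimum over such an edge is at an endpoint; and along a positively sloped edge the one-sided constraints $x\ge a$ and $y\ge b$ force its derivative $(y-b)\,dx+(x-a)\,dy$ to keep a constant sign, so it is monotone and again minimized at an endpoint. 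Hence the minimum over the entire hull boundary, and thus over $S$, is always attained at a strict vertex.

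With the surviving points in strictly convex position, I would lift them to $(x_i,y_i,x_iy_i)$ exactly as in \autoref{lem:rigid}. Their projection to the plane is precisely this convex polygon, so the hypothesis of the linear-time three-dimensional hull algorithm of Aggarwal et al.~\cite{AggGuiSax-DCG-89} is met and the lifted hull is obtained in $O(m)$ time; the Dobkin--Kirkpatrick hierarchy then takes a further $O(m)$ time and supports $O(\log m)$ linear-optimization queries~\cite{DobKir-Algs-85}. After an update with offsets $(a,b)$, the minimum product of ranks is found, as in \autoref{lem:rigid}, by minimizing the linear function $z_i-bx_i-ay_i$ over this hull, where $z_i=x_iy_i$.

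The step I expect to be the crux is the second one: bridging the gap between the weak conclusion of \autoref{lem:hull} (the minimizer lies somewhere on the hull, possibly in the relative interior of an edge) and the strictly-convex-position hypothesis that the Aggarwal et al.\ algorithm requires of its projected input. The monotonicity-along-edges argument, which crucially uses the one-sided constraints $a\le x_i$ and $b\le y_i$ rather than holding for arbitrary $(a,b)$, is exactly what lets me prune to strict vertices safely. Everything else is a direct substitution of the faster subroutines into the argument already used for \autoref{lem:rigid}.
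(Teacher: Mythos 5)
Your proposal is correct and follows essentially the same route as the paper's proof: reduce to the two-dimensional convex hull of the rank points (justified by \autoref{lem:hull} together with the restriction $a\le x_i$, $b\le y_i$), compute that hull in $O(m)$ time from the given sorted order by a Graham-type scan, then apply the linear-time algorithm of Aggarwal et al.\ to the lifted points and build the Dobkin--Kirkpatrick hierarchy, exactly as in \autoref{lem:rigid}. Your extra argument that the minimizer must be a strict hull vertex (ruling out points interior to hull edges, via monotonicity or concavity of the product along each edge) addresses a degeneracy the paper leaves implicit, so it is added rigor on the same path rather than a different approach.
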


\begin{proof}
We use Graham scan to compute the 2d convex hull from the sorted order of points in linear time, and the algorithm of Aggarwal et al.~\cite{AggGuiSax-DCG-89} to compute the 3d convex hull from the 2d convex hull in linear time. The Dobkin--Kirkpatrick hierarchy construction time is also linear.
\end{proof}

\subsection{Maximization diagram}

Instead of lifting the points $(x_i,y_i)$ to the convex hull of three-dimensional points $(x_i,y_i,x_iy_i)$, an alternative representation for each rigid subset would be to represent it by the minimization diagram or maximization diagram of the functions $f_i(a,b)=(x_i-a)(y_i-b)=ab-ay_i-bx_i+z_i$. Then, the minimum or maximum product of ranks for the rigid subset with rank offsets $a$ and $b$ could be obtained by performing a point location query in this diagram, rather than by performing an extreme-point query on a three-dimensional polyhedron.

Because the quadratic term $ab$ in the definition of the function $f_i(a,b)$ does not depend on the point $(x_i,y_i)$, and is equal for all points, it does not affect the minimization or maximization: we obtain the same minimization or maximization diagrams for the linear functions $g_i(a,b)=-ay_i-bx_i+z_i$. As the minimization or maximization diagram of linear functions, these diagrams have convex polygon cells, separated by \emph{bisector} lines, the lines consisting of the points $(a,b)$ at which two of these functions are equal.

\begin{figure}[t]
\centering
\includegraphics[width=0.7\columnwidth]{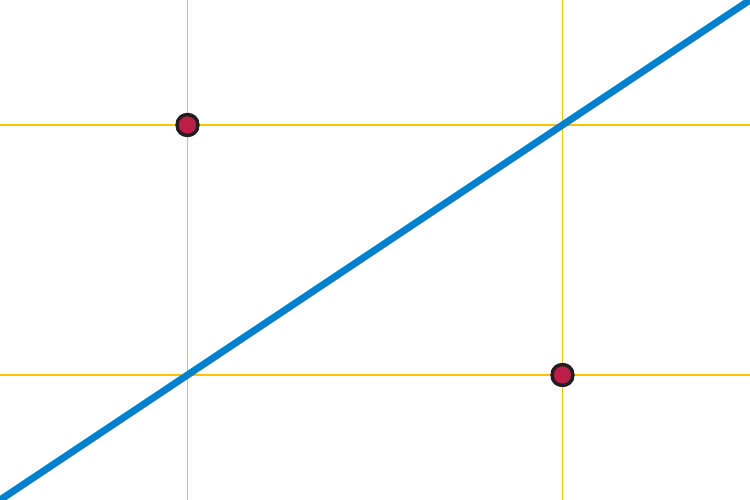}
\caption{The bisector between two sites in the minimization diagram is the line through the other two corners of their bounding box.}
\label{fig:bisector}
\end{figure}

\begin{lemma}
\label{lem:bisector}
The bisector of any two given points (sites) $(x_i,y_i)$ and $(x_j,y_j)$ in the minimization or maximization diagram described above is a line that passes through the other two corners $(x_i,y_j)$ and $(x_j,y_i)$ of the bounding box of the two points (\autoref{fig:bisector}).
\end{lemma}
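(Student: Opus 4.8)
The plan is to use the fact that the bisector is, by definition, the locus $g_i(a,b)=g_j(a,b)$ where the two sites' functions agree; this locus is identical for the minimization and the maximization diagram, since both are governed by the same equality, so it suffices to treat them together. The equation $g_i=g_j$ is affine in $(a,b)$, so its solution set is a line whenever the two sites are distinct. A line is determined by any two of its points, so it is enough to exhibit two distinct points lying on the bisector and to check that they are exactly the two box corners named in the statement. The corners $(x_i,y_j)$ and $(x_j,y_i)$ are distinct precisely because $(x_i,y_i)\ne(x_j,y_j)$, which is also what guarantees that $g_i-g_j$ is a nonconstant affine function, and hence that its zero set really is a line rather than empty or all of the plane; I would flag this degenerate bookkeeping as the only point requiring care.

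The cleanest route to verifying that each corner lies on the bisector is to work with the original products $f_i(a,b)=(x_i-a)(y_i-b)$ rather than the reduced linear forms $g_i$. Since $f_i$ and $g_i$ differ only by the common quadratic term $ab$, the equations $f_i=f_j$ and $g_i=g_j$ have the same solution set, so it is enough to show $f_i=f_j$ at each corner. The key observation is that $f_i$ vanishes whenever $a=x_i$ or $b=y_i$. At the corner $(a,b)=(x_i,y_j)$ the first coordinate forces $f_i=(x_i-x_i)(y_i-y_j)=0$, while the second forces $f_j=(x_j-x_i)(y_j-y_j)=0$; hence $f_i=f_j=0$ there. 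Symmetrically, at $(a,b)=(x_j,y_i)$ the second coordinate kills $f_i$ and the first kills $f_j$, so again $f_i=f_j=0$. Both corners therefore lie on the bisector, and since they are distinct they determine it.

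I expect no substantive obstacle: the argument is a short verification, and the only subtlety is the handling of the cases $x_i=x_j$ or $y_i=y_j$, where the bisector degenerates to a vertical or horizontal line but the two corners remain distinct and the conclusion is unchanged. If a purely algebraic presentation were preferred, I would instead expand $g_i(a,b)-g_j(a,b)=a(y_j-y_i)+b(x_j-x_i)+(x_iy_i-x_jy_j)$ and substitute the two corners directly, checking that the terms cancel; this yields the same result but obscures the reason why the box corners are the natural points to test.
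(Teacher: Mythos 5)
Your proof is correct, but it takes a genuinely different route from the paper. The paper argues geometrically: when the bounding box is a square, a reflection through the diagonal swaps the two sites and swaps the coordinate differences, so the product $(x-a)(y-b)$ is equal for both sites along that diagonal; the general case is then reduced to the square case by an affine scaling of one coordinate (which acts identically on $g_i$ and $g_j$), and the degenerate case of sites on a common horizontal or vertical line is handled by continuity. Your argument replaces all of this with a single direct verification: since $f_i-f_j=g_i-g_j$, the bisector is the zero set of a nonconstant affine function, hence a line, and each box corner lies on it because both $f_i$ and $f_j$ vanish there --- the corner $(x_i,y_j)$ sits on the line $a=x_i$ where $f_i=0$ and on the line $b=y_j$ where $f_j=0$, and symmetrically for $(x_j,y_i)$. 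Your approach is shorter, uniform over all cases (including $x_i=x_j$ or $y_i=y_j$, where the corners remain distinct so no limiting argument is needed), and it explains \emph{why} the box corners are the natural points to check: they are intersections of the degenerate zero loci of the two product functions. What the paper's proof buys in exchange is geometric intuition --- the reflection symmetry makes visible the analogy with perpendicular bisectors in ordinary Voronoi diagrams, which is the guiding theme of that section --- but as a verification of the stated lemma, yours is arguably cleaner.
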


\begin{proof}
When the bounding box is a square, this follows by symmetry: a reflection through the line described in the lemma maps the two given points to each other, swapping the two Cartesian coordinates, so for any point $(a,b)$ on the line described in the lemma, the coordinate differences between $(a,b)$ and the two given points are equal but reversed.
That is, $|x_i-a|=|y_j-b|$ and $|x_j-a|=|y_i-b|$. Since the quantity being minimized is the product of these coordinate differences, it is equal for the two given points along this line.

For any other two points, not both on the same vertical or horizontal line, we may apply a linear transformation to one of the coordinates that makes the bounding box a square;
this transformation affects both of the functions $g_i$ and $g_j$ in the same way, so the bisector of the transformed points (the diagonal of the square) is the transformation of the bisector, which must therefore be the diagonal of the original bounding box. The remaining case, that the points are on a horizontal or vertical line, follows by continuity.
\end{proof}

\begin{figure}[t]
\includegraphics[width=\columnwidth]{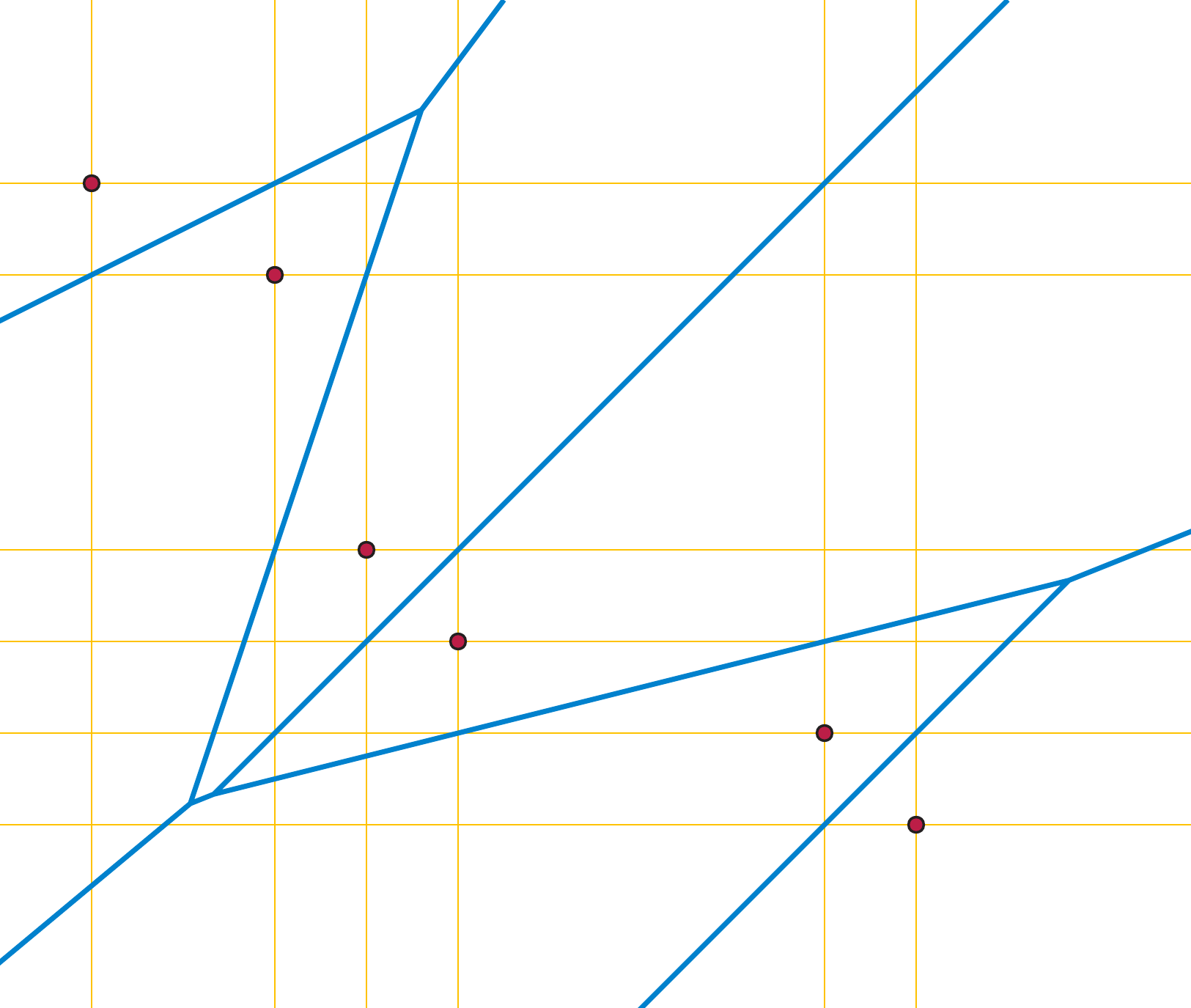}
\caption{The maximization diagram for a given set of maximal points. Although this diagram is well-defined over the whole plane, we will only query it within the bottom-left quadrant, below and to the left of all the given points.}
\label{fig:diagram}
\end{figure}

\autoref{fig:diagram} depicts an example of the maximization diagram described above.

\subsection{Expected linear time construction}

These diagrams can be constructed in $O(n\log n)$ time, either by interpreting them as a lower or upper envelope of three-dimensional planes (the graphs of the functions they minimize or maximize) or by using algorithms for abstract Voronoi diagrams with bisectors determined as in \autoref{lem:bisector}~\cite{MehMeiODu-DCG-91}. 
However, as we now show, they can be constructed in expected linear time.

Our construction begins with the following analogue of \autoref{lem:hull}.
We observe that, in constructing the maximization diagram for a collection of points, we need only include the points $(x_i,y_i)$ that are maximal (meaning that there is no other point $(x_j,y_j)$ with $x_j\ge x_i$ and $y_j\ge y_i$), for those are the only ones that can produce the maximum of the function values at any point $(a,b)$.

\begin{lemma}
\label{lem:dom}
Let $S$ be a set of points, let $(a,b)$ be a pair of numbers with $a$ less than or equal to all $x$-coordinates in $S$ and $b$ less than or equal to all $y$-coordinates in $S$. Let $(x_i,y_i)$ be the point in $S$ maximizing $(x_i-a)(y_i-b)$.
Then $(x_i,y_i)$ is one of the maximal points of $S$, meaning that there is no other point $(x_j,y_j)$ in $S$ with $x_j\ge x_i$ and $y_j\ge y_i$.
\end{lemma}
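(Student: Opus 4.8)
The plan is to prove this as the maximization analogue of \autoref{lem:hull}, but by a direct coordinatewise-monotonicity argument that is even simpler, since it needs no tangent line. I would argue by contradiction: suppose the maximizer $(x_i,y_i)$ is \emph{not} maximal, so there is a distinct point $(x_j,y_j)\in S$ with $x_j\ge x_i$ and $y_j\ge y_i$, and I would show that $(x_j,y_j)$ achieves a product at least as large, so that $(x_j,y_j)$ is also a maximizer.

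The core of the argument is a single inequality. Because $a$ is at most every $x$-coordinate of $S$ and $b$ is at most every $y$-coordinate, all four quantities $x_i-a$, $y_i-b$, $x_j-a$, $y_j-b$ are nonnegative; moreover the domination hypothesis gives $x_j-a\ge x_i-a\ge 0$ and $y_j-b\ge y_i-b\ge 0$. Multiplying these two inequalities yields $(x_j-a)(y_j-b)\ge(x_i-a)(y_i-b)$. Geometrically this is the same hyperbola picture as in \autoref{lem:hull} (\autoref{fig:hyperbola}): the level curve $(x-a)(y-b)=(x_i-a)(y_i-b)$ through $(x_i,y_i)$ splits the feasible quadrant $x\ge a$, $y\ge b$ into a region of larger products and one of smaller products, and the entire closed quadrant dominating $(x_i,y_i)$ lies in the weakly-larger region, so no point of $S$ can sit strictly in the larger-product region.

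The step that requires the most care, and is really the only subtlety, is turning ``at least as good'' into the stated conclusion, because the inequality above fails to be strict exactly when one factor vanishes, i.e.\ in the degenerate case $a=x_i$ or $b=y_i$, where the maximum product is $0$ and every dominating point merely ties. To handle this cleanly I would not fix an arbitrary maximizer but choose $(x_i,y_i)$ to be a maximizer that is itself maximal, under the coordinatewise order, within the finite nonempty set of all maximizers: if some $(x_k,y_k)\in S$ dominated this chosen point, the inequality above would make $(x_k,y_k)$ a maximizer too, dominating our choice and contradicting its maximality among the maximizers. Hence the chosen maximizer is maximal in all of $S$. In nondegenerate position the products are distinct and the maximizer is unique, recovering the statement verbatim; in every case the maximum of the product over $S$ is attained at a maximal point, which is precisely what the expected-linear-time construction of the maximization diagram requires.
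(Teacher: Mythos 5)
Your proposal is correct, and its core is the same one-line monotonicity argument the paper uses: the paper's entire proof of \autoref{lem:dom} is that a dominating point $(x_j,y_j)$ ``would have a larger value'' of the product, which is exactly your multiplication of the two coordinatewise inequalities $x_j-a\ge x_i-a\ge 0$ and $y_j-b\ge y_i-b\ge 0$. Where you go beyond the paper is in the tie-handling, and this is a legitimate refinement rather than wasted caution: the paper asserts a \emph{strict} inequality, which fails when a factor vanishes --- e.g., with $S=\{(0,0),(1,0)\}$ and $(a,b)=(0,0)$, both points have product $0$, and the maximizer $(0,0)$ is dominated by $(1,0)$ --- a configuration permitted by the hypotheses, which only require $a\le x_j$ and $b\le y_j$ for all $j$. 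Read with an arbitrary choice of maximizer, the statement therefore fails in this degenerate case, and the paper's proof implicitly relies either on uniqueness of the maximizer (the definite article ``the point'') or on the strict positivity of the rank differences in the intended application. Your fix --- choosing a maximizer that is coordinatewise-maximal within the finite set of all maximizers, so that any dominating point would itself be a maximizer and contradict that choice --- restores the conclusion in the form the algorithm actually needs: the maximum of $(x-a)(y-b)$ over $S$ is always attained at some maximal point of $S$, so restricting the maximization diagram to maximal points loses nothing.
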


\begin{proof}
Any such point $(x_j,y_j)$ would have a larger value of $(x_i-a)(y_i-b)$.
\end{proof}

The quarter-plane of points with larger $x$- and $y$-coordinates than $(x_i,y_i)$, and their relation to the hyperbola of points with equal query values to $(x_i,y_i)$, is shown in \autoref{fig:hyperbola}.

To construct the maximization diagram in expected linear time we adapt an algorithm by Paul Chew for Voronoi diagrams of convex polygons~\cite{Che-90}.

\begin{lemma}
\label{lem:maxadj}
Let $S$ be a set of points, all of which are maximal in $S$, indexed in sorted order by their $x$-coordinates, and let $(x_i,y_i)$ and $(x_{i+1},y_{i+1})$ be consecutive points in this ordering. Then in the maximization diagram for $(x_i-a)(y_i-b)$, the cells for these two points share an edge.
\end{lemma}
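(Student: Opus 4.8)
The plan is to exhibit an explicit point of the plane at which the cells of $(x_i,y_i)$ and $(x_{i+1},y_{i+1})$ meet while strictly dominating every other cell, and then to thicken this single point into a shared edge by a continuity argument. The natural candidate is the corner $(x_i,y_{i+1})$ of the bounding box of the two consecutive points, which by \autoref{lem:bisector} already lies on their bisector. Because the points of $S$ are all maximal, sorting them by increasing $x$-coordinate sorts them by strictly decreasing $y$-coordinate, so $x_i<x_{i+1}$ forces $y_i>y_{i+1}$ and $(x_i,y_{i+1})$ is indeed the lower-left corner of that box. Evaluating $f_i(a,b)=(x_i-a)(y_i-b)$ and $f_{i+1}$ at this corner, one factor of each vanishes, so $f_i(x_i,y_{i+1})=f_{i+1}(x_i,y_{i+1})=0$; and since the maximization diagram of the $f_i$ coincides with that of the linear functions $g_i$ (the shared $ab$ term cancels in every pairwise comparison), it suffices to compare the plain products $f_k$.

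The first real step is to check that every other site is \emph{strictly} beaten at this corner, i.e.\ that $f_k(x_i,y_{i+1})=(x_k-x_i)(y_k-y_{i+1})<0$ for all $k\neq i,i+1$. This splits into the two tails of the sorted order: for $k<i$ we have $x_k<x_i$ but $y_k>y_i>y_{i+1}$, so the two factors carry opposite signs; for $k>i+1$ we have $x_k>x_i$ but $y_k<y_{i+1}$, again opposite signs. In either case the product is negative, so at $(x_i,y_{i+1})$ the two consecutive sites tie for the maximum of $g$ and strictly exceed the value attained by every other site. This is the combinatorial heart of the argument, and it is exactly where maximality of $S$ is used.

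The last step, and the only place any care is needed, is upgrading this pointwise statement to a genuine one-dimensional shared edge rather than an isolated shared vertex. Here I would invoke continuity of the finitely many linear functions $g_k$: since $g_i(x_i,y_{i+1})=g_{i+1}(x_i,y_{i+1})$ strictly exceeds $g_k(x_i,y_{i+1})$ for every other $k$, the same strict inequalities persist throughout some open disk around the corner. Inside that disk only sites $i$ and $i+1$ can attain the maximum, so the diagram restricted to the disk reduces to the two halfplanes determined by their bisector, and the segment of the bisector lying in the disk is a common boundary of positive length, which is precisely a shared edge. The main obstacle is therefore conceptual rather than computational: one must locate a touching point at which every competitor is \emph{strictly} worse, so that no third cell can cut in and reduce the contact to a single vertex, and the choice of the corner $(x_i,y_{i+1})$ together with the maximality of $S$ is exactly what secures this.
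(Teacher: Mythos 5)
Your proof is correct and follows essentially the same route as the paper's: both exploit the fact that maximality forces the $y$-coordinates to decrease along the $x$-sorted order, and both conclude via \autoref{lem:bisector} that the cells of the two consecutive sites meet along the diagonal of their bounding box. The only difference is technical rather than conceptual: the paper shows that the two sites dominate all other sites throughout the entire bounding rectangle (so the rectangle is covered by just their two cells, meeting along its diagonal), whereas you establish strict domination at the single corner $(x_i,y_{i+1})$ and then inflate that point to an edge of positive length by a continuity argument.
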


\begin{proof}
Within the bounding rectangle of $(x_i,y_i)$ and $(x_{i+1},y_{i+1})$, the point $(x_i,y_i)$ has a larger query value than all points of $S$ with smaller index, and the point $(x_{i+1},y_{i+1})$ has a larger query value than all points of $S$ with larger index, so the maximization diagram within the rectangle consists only of points in the cells for these two points. By \autoref{lem:bisector} the cells meet within the rectangle along the bisector of these two points, which is the diagonal of the rectangle.
\end{proof}

The shared edge is not in a part of the diagram that we will query in our data structure for products of ranks, but its location is unimportant for the use we will make of it in the following lemma.

\begin{lemma}
\label{lem:linmax}
Let $S$ be any subset of elements in the product of ranks problem, of size $m$, for which the sorted order by $x$-coordinate is known. Then in randomized expected time $O(m)$ we can build a data structure for $S$ such that, throughout any sequence of updates for which $S$ is rigid, we can compute the elements of $S$ with the maximum product of ranks in time $O(\log m)$ per update.
\end{lemma}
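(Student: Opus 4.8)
The plan is to build the maximization diagram of the maximal points of $S$ in expected linear time by adapting Chew's algorithm, and then to preprocess that diagram for point location so that each update reduces to locating the query point $(a,b)$ among its cells. First I would extract the maximal points: a single right-to-left scan of the $x$-sorted order, tracking the largest $y$-coordinate seen so far, reports them in $O(m)$ time, sorted simultaneously by increasing $x$ and decreasing $y$ (a staircase). By \autoref{lem:dom} these are the only points that can maximize $(x_i-a)(y_i-b)$ for a query in the bottom-left quadrant, so it suffices to build their maximization diagram; I keep them in a doubly linked list in $x$-order.

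Here the staircase plays the role that a convex polygon plays in Chew's construction. By \autoref{lem:maxadj} any two points consecutive in the $x$-order have adjacent cells, and this remains true for every subset of the staircase, since a subset of maximal points is again a set of maximal points with the induced $x$-order --- exactly the invariant Chew's method relies on. I would delete the points in a uniformly random order, splicing each out of the list and recording its two current neighbors, then reinsert them in reverse order. When a point is reinserted between its recorded neighbors, \autoref{lem:maxadj} guarantees that those neighbors' cells already border its new cell, giving a starting edge from which I can locally traverse only the diagram edges that the insertion destroys. Because the bisectors are lines (\autoref{lem:bisector}) and the diagram is a planar subdivision with convex cells of total complexity $O(m)$, a standard backwards analysis bounds the expected number of edges destroyed when inserting a random point into a $k$-point subdiagram by $O(1)$, so the whole construction runs in $O(m)$ expected time.

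Finally I would preprocess the resulting subdivision for point location. Clipping the cells to a bounding box large enough to contain the bottom-left query quadrant and triangulating them (possible in linear time, since the cells are convex) yields a triangulated subdivision on which Kirkpatrick's hierarchy supports $O(\log m)$-time point location after $O(m)$ preprocessing. For an update with offsets $a$ and $b$, maximizing the product $(x_i-a)(y_i-b)$ is equivalent to maximizing the linear function $g_i(a,b)$, so locating the point $(a,b)$ in the diagram returns the maximizing element in $O(\log m)$ time.

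The main obstacle I anticipate is justifying that Chew's algorithm transfers faithfully to this setting: I must confirm that the maximization diagram has exactly the features the algorithm needs --- connected convex cells, line bisectors, and the consecutive-adjacency invariant stable under taking subsets --- so that both the local reinsertion step and the backwards-analysis bound on expected work survive the change from Euclidean Voronoi diagrams to this product-of-ranks diagram.
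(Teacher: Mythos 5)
Your construction is, in its core, the same as the paper's own proof: extract the maximal points in linear time, then run Chew's randomized incremental construction by deleting points from a doubly linked list in $x$-order in random order and reinserting them in reverse, with \autoref{lem:maxadj} supplying the neighboring cell needed to seed each insertion. Your explicit observation that a subset of the staircase is again a set of maximal points in its induced $x$-order, so that \autoref{lem:maxadj} applies to every intermediate diagram, is exactly the (implicit) justification the paper's proof relies on, and it is good that you spelled it out.

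The one step where you deviate---point location---has a genuine flaw as stated. No bounding box can ``contain the bottom-left query quadrant,'' because that quadrant is unbounded, and the rank offsets $(a,b)$ really do escape any box fixed at construction time: while $S$ stays rigid, arbitrarily many insertions can occur below and to the left of all of $S$, driving $a$ and $b$ arbitrarily far negative. So clipping the diagram and running Kirkpatrick's planar point location on the clipped triangulation would misclassify every later query that falls outside the box, and the lemma must hold for \emph{any} sequence of updates keeping $S$ rigid. The paper avoids this by using the fact that the diagram is the maximization diagram of the linear functions $g_i$: it is the projection of a three-dimensional intersection of halfspaces, on which a Dobkin--Kirkpatrick hierarchy, built in linear time, answers the query for an arbitrary $(a,b)$ in $O(\log m)$ time with no boundedness assumption (alternatively, the history DAG of a vertical decomposition of the randomized incremental construction gives logarithmic expected query time). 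Substituting either of these for your clipped planar structure repairs the argument; everything else in your proposal stands.
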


\begin{proof}
The maximal points in $S$ can be found in linear time from the sorted order by $x$-coordinates, using a stack algorithm closely related to Graham scan.

We construct the maximization diagram by a randomized incremental algorithm in which we randomly permute the points and add them to the diagram one at a time in that random order. By the analysis of Chew~\cite{Che-90}, this can be done in expected constant time per point as long as we know the identity of a neighboring cell in the diagram of the points added so far. We can form a random permutation with this additional information about neighboring cells by starting with a doubly linked list of all of the points, in $x$-coordinate order,  deleting randomly chosen points from the linked list until none are left, and then reversing the order of the deletions. By \autoref{lem:maxadj}, the neighbors of a point $(x_i,y_i)$ in the linked list at the time of its deletion will form neighboring cells in the maximization diagram at the time of its insertion.

Because it is the maximization diagram of a set of linear functions, we can interpret this diagram as a three-dimensional intersection of halfspaces, and construct a Dobkin--Kirkpatrick hierarchy from it in linear time, suitable for performing point location queries in logarithmic time. (Alternatively, the history DAG of a vertical decomposition of the randomized incremental maximization diagram construction can be used as a point location data structure with logarithmic expected time per query.)
\end{proof}

\section{Partitioned data structure}
\subsection{One-dimensional partition}

To partition our given elements into rigid subsets, we first consider a one-dimensional partition method, which we will apply separately to the two rankings of the elements.

\begin{lemma}
\label{lem:1d}
Let $f$ be any positive concave function of a single argument. Then for any sequence $S$ of ordered values undergoing insertions and deletions, we can maintain a partition of $S$ into an ordered sequence of $O(n/f(n))$ contiguous subsets, with $O(f(n))$ elements in each subset,
changing $O(1)$ subsets per update, using a data structure with time $O(\log n)$ per update, where $n$ denotes the current size of $S$.
\end{lemma}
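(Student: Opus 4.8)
The plan is to maintain the partition as a \emph{block decomposition}: store the elements of $S$ in a balanced search tree ordered by value, and group contiguous runs of elements into blocks, together with a second balanced search tree whose leaves are the blocks, so that the block containing an inserted or deleted value can be located in $O(\log n)$ time and block boundaries can be updated in $O(\log n)$ time. The target block size will be a quantity $t$ kept proportional to $f(n)$, and the entire difficulty is that, as $n$ changes, $f(n)$ and hence $t$ drift, which threatens to invalidate the size bounds of blocks that no update has touched.

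To control this drift I would freeze $t$ within \emph{epochs}: epoch $k$ is the span of updates during which $n\in[2^k,2^{k+1})$, and throughout epoch $k$ we use the fixed target $t_k=f(2^k)$. Concavity is exactly what makes this work. Since $f$ is positive and concave with $f(0)\ge 0$, it satisfies $f(2x)\le 2f(x)$, so $t_k\le f(n)\le f(2^{k+1})\le 2t_k$ for every $n$ in the epoch, and the targets of consecutive epochs differ by at most a factor of two. Hence a bound of the form $\Theta(t_k)$ is automatically a bound of the form $\Theta(f(n))$ throughout the epoch.

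Within an epoch I would maintain, by purely local operations, two invariants: (i) every block has at most $2t_k$ elements, and (ii) every two consecutive blocks together have more than $t_k$ elements. On an insertion I locate the block, add the element, and split the block into two near-equal halves if it now exceeds $2t_k$ (each half then exceeds $t_k$, so no pair of consecutive blocks is violated); on a deletion I remove the element and, if the block together with a neighbour drops to at most $t_k$, merge the two. Because a single update changes the size of only one block, only the $O(1)$ pairs and the single block incident to that change can become violated, so each update repairs $O(1)$ blocks and costs $O(\log n)$ in the search trees. Invariant (i) gives block size $O(f(n))$ directly; invariant (ii) bounds the block count, since grouping the blocks into disjoint consecutive pairs shows there are fewer than $2n/t_k$ of them, which is $O(n/f(n))$ because $t_k\ge f(n)/2$.

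The main obstacle is the epoch boundary, where $t$ jumps by up to a factor of two and invariants (i) and (ii) for the new target may be violated by untouched blocks. Since the jump is only a constant factor, the two size bounds degrade by only a constant factor, so properties (a) and (b) still hold asymptotically at the instant of the transition; what remains is to restore the exact invariants for the new target. I would do this with a background sweep that repairs $O(1)$ blocks per subsequent update: a full sweep touches $O(n/t)$ blocks, whereas an epoch lasts $\Theta(2^k)=\Theta(n)$ updates, so $O(1)$ repairs per update more than suffice and the worst-case number of changed blocks per update stays $O(1)$. (If amortized bounds are acceptable, one may instead repair every violated block at the transition and charge the $O(n/t)$ cost to the $\Theta(n)$ updates of the epoch, yielding $O(1)$ amortized changed blocks per update and a simpler argument.) I expect the delicate part of the write-up to be interleaving this sweep with ongoing splits and merges, and handling the possibility that $n$ oscillates back and forth across an epoch boundary.
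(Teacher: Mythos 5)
Your overall framework---balanced search trees over the elements and the blocks, a doubled upper bound on block size, a paired lower bound to control the block count, and local split/merge repairs---is the same skeleton the paper uses. The difference is how the drift of $f(n)$ is absorbed, and that is where your argument has a genuine gap. You freeze the target at $t_k=f(2^k)$ throughout the epoch $n\in[2^k,2^{k+1})$, so the target jumps by up to a factor of $2$ at an epoch boundary, and you repair the resulting violations with a background sweep whose correctness rests on the claim that ``an epoch lasts $\Theta(2^k)=\Theta(n)$ updates.'' That claim is false: an epoch ends whenever $n$ crosses a power of two, so an adversary alternating one insertion and one deletion at $n=2^k$ makes \emph{every} update an epoch transition. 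Each transition restarts (or retargets) your sweep, so the sweep need never complete, and the same oscillation defeats your parenthetical amortized variant, since the $O(n/t)$ cost of a full repair cannot be charged to a ``$\Theta(n)$-update epoch'' that in fact contains $O(1)$ updates. You flag this oscillation as a delicate point to be handled in the write-up, but it is exactly the worst-case guarantee of the lemma ($O(1)$ subsets changed and $O(\log n)$ time per update) that is at stake, so it must be resolved, not deferred. A standard repair is hysteresis: use overlapping thresholds, e.g.\ raise the target when $n$ reaches $2^{k+1}$ but lower it only when $n$ falls below $\frac{3}{4}\cdot 2^k$, which forces $\Omega(2^k)$ updates between consecutive transitions and lets your sweep argument go through.

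It is instructive to see how the paper sidesteps the issue without epochs. It tracks $\lceil f(n)\rceil$ directly and keeps direction-dependent invariants with \emph{additive} slack: all subsets have size at most $2\lceil f(n)\rceil+2$ always, and at most $2\lceil f(n)\rceil$ while the structure is ``growing''; no two consecutive subsets are both below $\lceil f(n)\rceil-1$ always, nor both below $\lceil f(n)\rceil$ while ``shrinking.'' Because the slack is exactly one unit of $\lceil f(n)\rceil$, a change of direction (your oscillation case) creates no violations at all---the allowed ranges simply coincide---while concavity is used for something stronger than your $f(2x)\le 2f(x)$ bound: it guarantees that two changes of $\lceil f(n)\rceil$ in the \emph{same} direction are at least $n_0/f(n_0)$ updates apart, which is enough time for one merge or split per update to clear the at most $n_0/f(n_0)$ violations the first change created. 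In short, the paper's additive-slack invariant makes oscillation free by design, whereas your multiplicative, factor-of-two jumps cannot be absorbed by constant slack and therefore force you to prove a spacing property between transitions---which, as written, you have not done.
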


\begin{proof}
We use binary search trees to keep track of the sequence of elements and the sequence of subsets. As keys for the binary search tree of subsets, we use the values of their first elements. In this way we can find the subset containing the updated element, after any update, and determine the new size of this subset. We also keep track of the sizes of each subset and maintain priority queues for the largest subsets and for the smallest consecutive pairs of subsets.

We maintain as an invariant the requirements that the sizes of all subsets in the partition are at most $2\lceil f(n)\rceil+2$, and that no two consecutive subsets both have size less than $\lceil f(n)\rceil-1$. We say that our structure is \emph{growing} if, for the most recent update having a different value of $\lceil f(n)\rceil$, that value was smaller than 
the current value, and \emph{shrinking} otherwise. If the structure is growing, we require that all subsets have size at most $2\lceil f(n)\rceil$, and if it is shrinking, we require that no two consecutive subsets both have size less than $\lceil f(n)\rceil$.

On each update, if the structure is growing, we select an arbitrary pair of consecutive subsets of size $\lceil f(n)\rceil-1$ (if such a pair exists) and merge them into a single subset. If the structure is shrinking, we select an arbitrary subset of size greater than $2\lceil f(n)\rceil$ (if such a subset exists) and split it into two subsets of size as close to equal as possible. We claim that this is sufficient to maintain our invariants. Clearly, it does so at the updates for which $\lceil f(n)\rceil$ does not change, so we need only consider the steps at which it does change.

In the case that $\lceil f(n)\rceil$ changes in such a way that the structure was growing before the update and is shrinking after the update, the invariants are automatically maintained, because the ranges of sizes of subsets and consecutive pairs of subsets that are allowed remain unchanged. The same is true when the structure was shrinking before the update and growing after the update.

When $\lceil f(n)\rceil$ increases twice in a row (so that it was growing both before and after the second increase), let $n_0$ be the value of $n$ at the first increase. Then at that time, there must be at most $n_0/f(n_0)$ consecutive pairs of small subsets, and (by concavity of $f$) at least $n_0/f(n_0)$ steps between the two increases. It only takes $n_0/2f(n_0)$ steps to eliminate all of the consecutive pairs of small subsets. So by the time that the second increase happens, all of the consecutive pairs of small subsets will have been eliminated, maintaining the invariant. 

Similarly, when $\lceil f(n)\rceil$ decreases twice in a row (so that it was shrinking both before and after the second decrease), let $n_0$ be the value of $n$ at the first decrease. Then at that time, there must be at most $n_0/2f(n_0)$ large subsets, and (by concavity of $f$) at least $n_0/f(n_0)$ steps between the two decreases. It only takes $n_0/2f(n_0)$ steps to eliminate all of the large subsets. So by the time that the second decrease happens, all of the consecutive pairs of large subsets will have been eliminated, maintaining the invariant.
\end{proof}

\subsection{Two-dimensional partition}

We now use our one-dimensional rank partition to partition the given elements into subsets, most of which remain rigid in each update. If the ranks of each element are $(x_i,y_i)$, we will maintain one rank partition on the ranks $x_i$, and a second rank partition on the ranks $y_i$, each with parameter $f(n)=\sqrt{n\log n}$. Then each subset $S_k$ of our two-dimensional partition will consist of elements that are grouped together both in the partition on the $x$-ranks and in the partition on the $y$-ranks.

\begin{lemma}
\label{lem:2d}
The partition into subsets $S_k$ described above has the following properties:
\begin{itemize}
\item There are $O(n/\log n)$ subsets.
\item Each update to the data causes $O(\sqrt{n/\log n})$ of the subsets, with total size $O(\sqrt{n\log n})$, to be non-rigid.
\item Each update to the data causes $O(\sqrt{n/\log n})$ of the subsets, with total size $O(\sqrt{n\log n})$, to be replaced by new subsets due to the change in the underlying one-dimensional partitions.
\end{itemize}
\end{lemma}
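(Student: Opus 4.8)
The plan is to derive all three bullets from the properties of the two underlying one-dimensional partitions guaranteed by \autoref{lem:1d}. With $f(n)=\sqrt{n\log n}$, each of the partition on the $x$-ranks and the partition on the $y$-ranks consists of $O(n/f(n))=O(\sqrt{n/\log n})$ contiguous groups, each of size $O(f(n))=O(\sqrt{n\log n})$, and each update changes only $O(1)$ of the groups in each partition. Write $X_i$ for the $x$-groups and $Y_j$ for the $y$-groups, so that each two-dimensional subset is an intersection $S_k=X_i\cap Y_j$. For the first bullet, the number of nonempty such intersections is at most the product of the numbers of $x$-groups and $y$-groups, which is $O(\sqrt{n/\log n})\cdot O(\sqrt{n/\log n})=O(n/\log n)$.

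For the second bullet, the key observation is a precise description of which subsets an update can disturb. An update inserts or deletes a single point $p$; its $x$-position falls into exactly one $x$-group $X_{i_0}$ and its $y$-position into exactly one $y$-group $Y_{j_0}$. I claim $S_k=X_i\cap Y_j$ stays rigid unless $i=i_0$ or $j=j_0$. Indeed, if $i\ne i_0$ then $p$'s $x$-position lies outside the contiguous $x$-range of $X_i$, hence outside the $x$-range of $S_k\subseteq X_i$, so $p$ does not lie between two elements of $S_k$ in the $x$-ordering; symmetrically, if $j\ne j_0$ the same holds in the $y$-ordering; and $p\notin S_k$, since $p\in S_k$ would force $i=i_0$ and $j=j_0$. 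By the definition of rigidity at the start of this section, $S_k$ is then rigid (equivalently, the update shifts every element of $S_k$ by the same amount in each coordinate, preserving all pairwise rank differences). The disturbed subsets are therefore exactly those contained in $X_{i_0}$ or in $Y_{j_0}$: at most one per $y$-group and one per $x$-group, so $O(\sqrt{n/\log n})$ subsets in total, with total size at most $|X_{i_0}|+|Y_{j_0}|=O(\sqrt{n\log n})$.

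For the third bullet, \autoref{lem:1d} guarantees that each update merges or splits only $O(1)$ groups in each of the two one-dimensional partitions. A change to one $x$-group $X_i$ forces every subset $X_i\cap Y_j$ meeting it to be rebuilt, of which there are $O(\sqrt{n/\log n})$ (one per $y$-group), with total size bounded by $|X_i|=O(\sqrt{n\log n})$; the symmetric bound holds for each changed $y$-group. Summing over the $O(1)$ changed groups in each coordinate gives $O(\sqrt{n/\log n})$ replaced subsets of total size $O(\sqrt{n\log n})$. I expect the main obstacle to be the rigidity observation in the second bullet: one must argue carefully that lying outside the contiguous rank-range of a group rules out lying strictly between two of the subset's elements in that ordering, and thereby confirm that an update shifts an undisturbed subset uniformly without altering any internal rank difference, which is the exact condition for rigidity.
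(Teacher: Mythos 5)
Your proposal is correct and follows essentially the same route as the paper's proof: all three bullets are derived from the guarantees of \autoref{lem:1d} with $f(n)=\sqrt{n\log n}$, the non-rigid subsets are exactly bounded by those meeting the one $x$-group and one $y$-group touched by the update, and the replaced subsets are bounded via the $O(1)$ one-dimensional changes per update. Your second bullet is in fact somewhat more careful than the paper's, since you spell out why a point falling outside a group's contiguous rank range cannot lie between two elements of any subset of that group, which the paper leaves implicit.
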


\begin{proof}
It follows from \autoref{lem:1d} and our choice of the function $f$ that each one-dimensional partition has $O(\sqrt{n/\log n})$ subsets, of size $O(\sqrt{n\log n})$, and that each update causes $O(1)$ changes to the one-dimensional partition.
Because each subset in the two-dimensional partition is determined by a pair of subsets in the two one-dimensional partitions, there are  $O(n/\log n)$ subsets in the two-dimensional partition.

In any update, only one subset of each one-dimensional partition contains non-rigid subsets of the two-dimensional partition. Therefore, the total number of non-rigid subsets is at most twice the number of two-dimensional subsets that can be contained in a single one-dimensional subset, $O(\sqrt{n/\log n})$, and the total size of the non-rigid subsets is at most twice the size of a one-dimensional subset, $O(\sqrt{n\log n})$. The analysis of the number of subsets that are replaced with new subsets and their total size is similar: each change to a one-dimensional subset causes changes to $O(\sqrt{n/\log n})$ two-dimensional subsets having a total of $O(\sqrt{n\log n})$ elements, so the bounds on replaced subsets follow from the fact that each update causes $O(1)$ changes to the one-dimensional partitions.
\end{proof}

\section{Which subsets to query?}

We introduced
\autoref{lem:hull} and \autoref{lem:dom} to aid in the efficient construction of rigid subsets, but they can
also be used to reduce the number of rigid subsets that we must query after any update.
As these two lemmas show, the point with the smallest product of ranks must be minimal in the coordinate ordering of the points, and the point with the largest product of ranks must be maximal.
The two-dimensional partition of \autoref{lem:2d} partitions the points in a grid pattern, and we need only query the rigid subsets for cells in this grid that can contain minimal or maximal points.

\begin{figure}[t]
\includegraphics[width=\columnwidth]{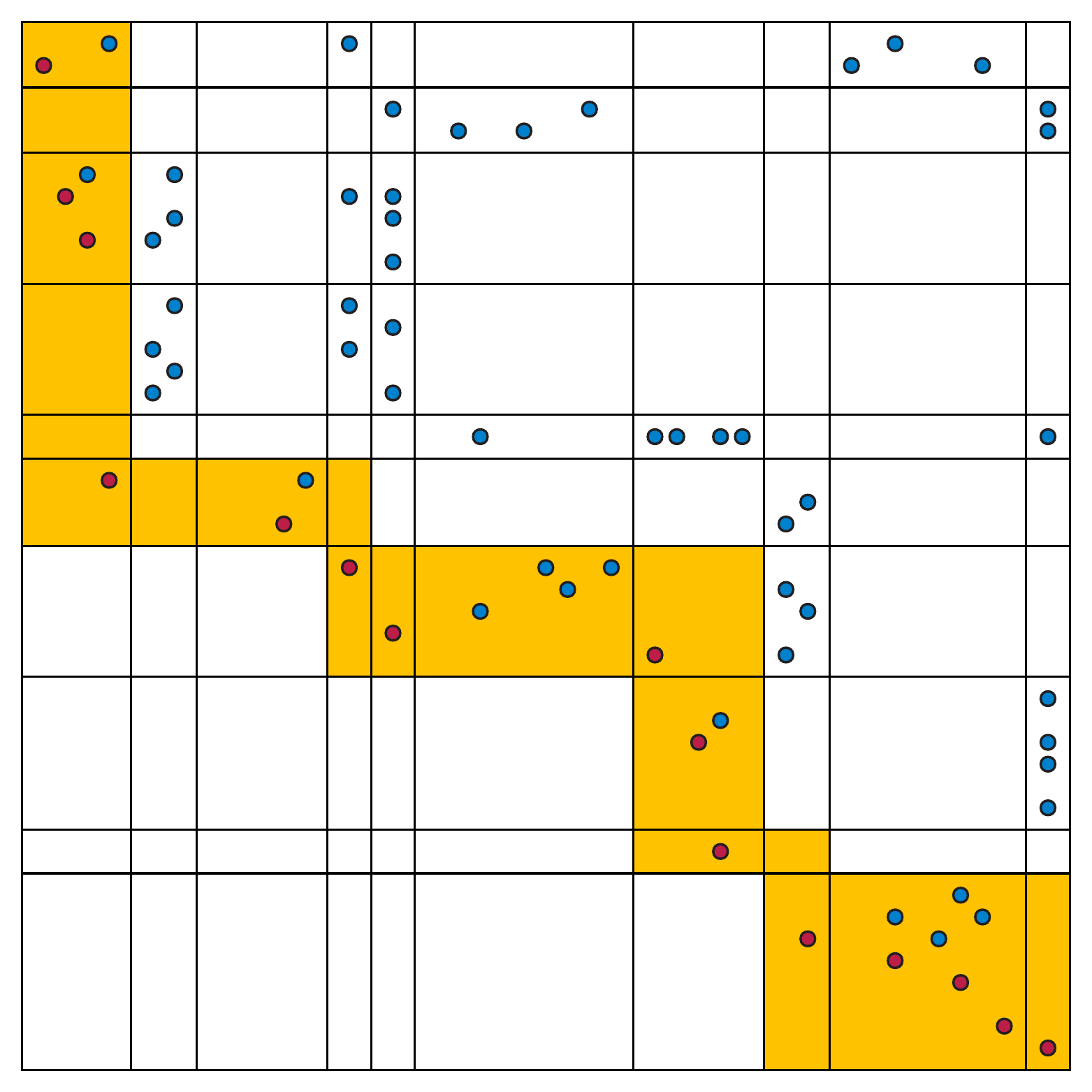}
\caption{A grid partition of a point set, and a path (yellow shading) through the cells of the grid, such that the cells of the path contain all minimal points of the set (shown as red).}
\label{fig:gridmin}
\end{figure}

\begin{lemma}
\label{lem:gridmin}
Let a given set of points be partitioned by $k$ axis-parallel lines into a grid of cells, represented in such a way that in constant time we can find the neighboring cell in any direction from any given cell and find the lowest nonempty cell in any column of the grid. Then in time $O(k)$ we can identify a subset of $O(k)$ of the grid cells that contain all of the minimal points in the set.
\end{lemma}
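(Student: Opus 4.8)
The plan is to work entirely at the level of grid cells, exploiting the fact that a point in an earlier column has strictly smaller $x$-coordinate than any point in a later column (they are separated by a vertical grid line), and a point in a lower row has strictly smaller $y$-coordinate than any point in a higher row. First I would record the structural fact behind \autoref{lem:hull} and \autoref{lem:dom}: the minimal points form a lower-left antichain, so sorted by $x$-coordinate their $y$-coordinates strictly decrease and they sweep out a monotone descending staircase. This staircase shape is what I aim to cover.

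The key auxiliary lemma is a cell-level domination criterion. If a nonempty cell $c'$ lies strictly to the lower-left of a cell $c$ (strictly smaller column index \emph{and} strictly smaller row index), then by the coordinate-separation property above every point of $c'$ dominates every point of $c$ from below and to the left, so $c$ contains no minimal point. Contrapositively, a cell can contain a minimal point only if its entire strict lower-left quadrant of cells is empty; I would call such cells \emph{candidates}. This is only a necessary condition, since within-column and within-row domination depends on the actual point positions rather than mere cell occupancy, but because the lemma asks only for cells that \emph{contain} all minimal points, an over-approximation by candidate cells is exactly what is wanted.

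Next I would make the candidate set explicit column by column. Let $b_i$ be the row of the lowest nonempty cell in column $i$ (available in $O(1)$ time by hypothesis), and let $\mu_{i-1}=\min_{i'<i}b_{i'}$ be the lowest nonempty row among all earlier columns, with $\mu_0$ set to the top row. A cell $(i,j)$ has empty strict lower-left quadrant exactly when $j\le\mu_{i-1}$, and it can be nonempty only when $j\ge b_i$; hence every candidate in column $i$ lies in the row interval $[b_i,\mu_{i-1}]$, which is nonempty precisely when $b_i\le\mu_{i-1}$, in which case $b_i=\mu_i$. The main obstacle, and the crux of obtaining $O(k)$ rather than the $\Theta(k^2)$ cells a grid may contain, is bounding the total number of these cells. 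I would handle it by a telescoping argument: the interval for column $i$ contains $\mu_{i-1}-\mu_i+1$ rows, so summing over the contributing columns gives $\sum_i(\mu_{i-1}-\mu_i)$ plus the number of contributing columns, where the first sum telescopes to at most the number of rows and the second is at most the number of columns, for a total of $O(k)$.

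Finally I would convert this into the claimed $O(k)$-time algorithm: sweep the columns left to right while maintaining the running minimum $\mu$; in each column query $b_i$ in constant time, and if $b_i\le\mu$ walk upward from the lowest nonempty cell to row $\mu$ using constant-time neighbor queries, emitting each cell encountered, then set $\mu\leftarrow\min(\mu,b_i)$. The per-column overhead is $O(1)$ and the total number of emitted cells is the $O(k)$ bound just proved, so the whole sweep runs in $O(k)$ time. The emitted cells form the monotone descending staircase of \autoref{fig:gridmin} and, being a superset of the candidates, contain every minimal point.
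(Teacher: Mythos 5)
Your proposal is correct and is essentially the paper's own construction: the paper likewise selects, in each column, the cells between that column's lowest nonempty cell and the running minimum of the lowest nonempty cells of earlier columns, realized as a monotone top-left-to-bottom-right staircase path walked by constant-time neighbor steps, with the same strict-lower-left cell-domination argument for coverage and the path's monotonicity giving the $O(k)$ bound. Your running-minimum sweep and telescoping count are just different bookkeeping for that same staircase of cells.
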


\begin{proof}
As we describe below, we select cells in the grid along a path from top left to bottom right, such that every unselected cell below the path is also below the lowest nonempty cell in its column, and every unselected cell above the path has a nonempty selected cell below and to the left of it. In this way, every minimal point of the given point set belongs to a selected cell, for there can be no points below and to the left of the path, and all points above and to the right are not minimal. \autoref{fig:gridmin} shows an example.

To find this path of grid cells, we begin at the top left cell of the grid. Then we repeatedly step to a neighboring cell, according to the following rules:
\begin{itemize}
\item If the current cell is the bottom right cell of the grid, we terminate the path.
\item If the current cell is not the lowest nonempty cell in its column, or if it belongs to the rightmost column, we step to the next cell down.
\item Otherwise, we step to the next cell to the right.
\end{itemize}
The path must extend across all columns, for it can only stop in the rightmost column.
If a cell is below the path, it must also be below the lowest nonempty cell in its column, or we would have stepped downward to it when the path crossed its column; therefore, all cells below the path are empty.
If a cell is above the path, then the path must have stepped below it in some column to the left of it, which can only happen when the lowest nonempty cell in that column is below and to the left of the given cell. Therefore, all cells above the path have a nonempty cell below and to the left of them.
\end{proof}

A similar method, with the ability to find the highest nonempty cell in each column, can find a path of grid cells containing all maximal points.

\section{Overall data structure}

Our overall data structure consists of:
\begin{itemize}
\item Two binary search trees on the two coordinate values of the elements, augmented to allow the rank of any element at any step of the update sequence to be looked up in logarithmic time per query.
\item Two one-dimensional partitions of the elements, one for each of the two rankings of the elements, according to \autoref{lem:1d}, with the parameter choice specified for \autoref{lem:2d}. 
\item The two-dimensional partition of the elements into rigid subsets $S_k$ defined from these one-dimensional partitions, according to \autoref{lem:2d}.
\item A graph describing the relation between neighboring cells in this two-dimensional partition, and the lowest or highest nonempty cell in each column of cells, suitable for use in \autoref{lem:gridmin}.
\item A sorted list of points in each partition set, sorted by their $x$-coordinates.
\item A data structure for maintaining the extreme points for the product of ranks of each subset $S_k$, through updates for which it is rigid, according to \autoref{lem:rigid}.
\end{itemize}

\begin{theorem}
The data structure described above can maintain the minimum or maximum product of ranks in time $O(\sqrt{n\log n})$ per update for the minimimum, or the same time bound in expectation for the maximum.
\end{theorem}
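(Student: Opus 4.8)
The plan is to bound the per-update cost of maintaining each component of the overall data structure separately, and then to observe that the total is governed by two balanced terms, each of size $O(\sqrt{n\log n})$: the cost of rebuilding the rigid subsets affected by an update, and the cost of querying the subsets that can contain an extreme point. The choice $f(n)=\sqrt{n\log n}$ in \autoref{lem:2d} is precisely what makes these two terms equal.

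First I would dispose of the cheap components. The two augmented binary search trees support rank lookups and updates in $O(\log n)$ time. By \autoref{lem:1d}, each one-dimensional partition is maintained in $O(\log n)$ time per update and changes only $O(1)$ of its blocks; consequently, by \autoref{lem:2d}, updating the neighbor graph, the per-subset sorted point lists, and the lowest/highest-nonempty-cell records needed by \autoref{lem:gridmin} touches only the $O(\sqrt{n/\log n})$ two-dimensional cells affected by those $O(1)$ block changes. All of this bookkeeping therefore costs $O(\sqrt{n/\log n})\cdot O(\log n)=O(\sqrt{n\log n})$ and stays within budget.

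Next, the rebuilding step. By \autoref{lem:2d}, each update makes $O(\sqrt{n/\log n})$ subsets non-rigid and replaces $O(\sqrt{n/\log n})$ further subsets, and in both cases the affected subsets have total size $O(\sqrt{n\log n})$. Because we keep each subset's points in $x$-sorted order, I would rebuild each affected subset's extreme-point structure using the \emph{linear-time} constructions of \autoref{lem:linmin} (for the minimum) or \autoref{lem:linmax} (for the maximum), rather than the $O(m\log m)$ construction of \autoref{lem:rigid}. This is the crucial point: summing a linear construction cost over subsets of total size $O(\sqrt{n\log n})$ gives $O(\sqrt{n\log n})$ (the $O(\sqrt{n/\log n})$ per-subset overhead is lower order), whereas an $O(m\log m)$ construction would lose an extra logarithmic factor. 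For the maximum this construction is randomized, which is the sole source of the expectation in the statement.

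Finally, the query step. Each one-dimensional partition has $O(\sqrt{n/\log n})$ blocks, so the two-dimensional partition forms a grid with $k=O(\sqrt{n/\log n})$ lines in each direction. Since \autoref{lem:hull} and \autoref{lem:dom} show that the minimizer is a minimal point and the maximizer a maximal point, I would apply \autoref{lem:gridmin} (and its maximal-point analogue) to select, in $O(k)=O(\sqrt{n/\log n})$ time, a set of $O(k)$ cells containing all candidate extreme points. For each selected cell I would recover the current rank offsets $(a,b)$ from the augmented search trees in $O(\log n)$ time, query its rigid structure in $O(\log m)=O(\log n)$ time, and return the best answer over all selected cells. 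This costs $O(\sqrt{n/\log n})\cdot O(\log n)=O(\sqrt{n\log n})$, using the identity $\sqrt{n/\log n}\cdot\log n=\sqrt{n\log n}$. The main thing to get right is thus the accounting itself — insisting on the linear-time rebuilds and verifying that the sorted orders they require remain available after merges, splits, and losses of rigidity — rather than any single hard geometric argument, all of which the preceding lemmas have already handled.
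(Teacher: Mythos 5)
Your proposal is correct and follows essentially the same argument as the paper's proof: rebuild the affected subsets using the linear-time constructions of \autoref{lem:linmin} and \autoref{lem:linmax} (the randomization of the latter being the source of the expectation), use \autoref{lem:gridmin} together with \autoref{lem:hull} and \autoref{lem:dom} to restrict queries to $O(\sqrt{n/\log n})$ cells, query each via \autoref{lem:rigid} in $O(\log n)$ time, and note that all remaining bookkeeping is lower order. Your write-up is somewhat more explicit about the accounting (in particular, about why the $O(m\log m)$ construction of \autoref{lem:rigid} must be avoided during rebuilds), but there is no substantive difference from the paper.
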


\begin{proof}
By \autoref{lem:2d}, each update causes changes to subsets $S_k$ of total size $O(\sqrt{n\log n})$; by \autoref{lem:linmin} and \autoref{lem:linmax}, reconstructing the extreme-point data structures for these subsets takes the stated time per update. After each update, we may use \autoref{lem:gridmin} to find a subset of $O(\sqrt{n/\log n})$ subsets to query, use the binary search trees to determine the offsets in rank for each of these selected subsets, and then query the extreme point within each subset in time $O(\log n)$ by \autoref{lem:rigid}. The total time for these queries is again the stated time per update. Maintaining the binary search trees and one-dimensional partitions takes an amount of time that is negligible with respect to this total time bound.
\end{proof}

\balance
\raggedright
\bibliographystyle{plainurl}
\bibliography{prodrank}

\end{document}